\setlist[itemize]{noitemsep, topsep=0pt, leftmargin=12pt}
\algnewcommand{\IfThenElse}[3]{
  \State \algorithmicif\ #1\ \algorithmicthen\ #2\ \algorithmicelse\ #3}
\newcommand{\removelatexerror}{\let\@latex@error\@gobble}
\newcommand{\tablefontsize}{\fontsize{8.5pt}{8pt}\selectfont}
\newtheorem{theorem}{Theorem}
\def\BibTeX{{\rm B\kern-.05em{\sc i\kern-.025em b}\kern-.08em
    T\kern-.1667em\lower.7ex\hbox{E}\kern-.125emX}}
\begin{document}

\IEEEoverridecommandlockouts
\IEEEpubid{\makebox[\columnwidth]{ 979-8-3315-2144-8/25/\textdollar31.00 \copyright2025 IEEE \hfill} \hspace{\columnsep}\makebox[\columnwidth]{1}}

\makeatletter
\renewcommand \footnoterule{
  \kern-3\p@
  \hrule\columnwidth
  \kern2.6\p@}
  
\makeatother
\title{Defect Analysis and Built-In-Self-Test for Chiplet Interconnects in Fan-out Wafer-Level Packaging$^*$
}
\author{    
  \IEEEauthorblockN{Partho Bhoumik, Christopher Bailey, and Krishnendu Chakrabarty}    \IEEEauthorblockA{School of Electrical, Computer and Energy Engineering, Arizona State University \\ ASU Center for Semiconductor Microelectronics (ACME), Arizona State University}  
  }
\maketitle

\begin{abstract}
Fan-out wafer-level packaging (FOWLP) addresses the demand for higher interconnect densities by offering reduced form factor, improved signal integrity, and enhanced performance. However, FOWLP faces manufacturing challenges such as coefficient of thermal expansion (CTE) mismatch, warpage, die shift, and post-molding protrusion, causing misalignment and bonding issues during redistribution layer (RDL) buildup. Moreover, the organic nature of the package exposes it to severe thermo-mechanical stresses during fabrication and operation. In order to address these challenges, we propose a comprehensive defect analysis and testing framework for FOWLP interconnects. We use Ansys Q3D to map defects to equivalent electrical circuit models and perform fault simulations to investigate the impacts of these defects on chiplet functionality. Additionally, we present a built-in self-test (BIST) architecture to detect stuck-at and bridging faults while accurately diagnosing the fault type and location. Our simulation results demonstrate the efficacy of the proposed BIST solution and provide critical insights for optimizing design decisions in packages, balancing fault detection and diagnosis with the cost of testability insertion.

\end{abstract}

\section{Introduction} 

Advanced packaging has emerged as a pivotal solution for extending the capabilities of integrated circuits (ICs) \cite{lau2022recent}. 
Traditionally, semiconductor devices were designed as large monolithic dies, but as circuits became more complex, industry shifted toward dividing these large dies into smaller, specialized chiplets with higher yields. These chiplets can be interconnected to enable heterogeneous integration, facilitating the development of advanced 2.5D and 3D packages \cite{li2024high, chen2024survey}.

Among the various advanced packaging techniques, fan-out wafer-level packaging (FOWLP) offers several advantages over conventional methods, including enhanced electrical performance, reduced form factor, and higher I/O density \cite{hudson2021deca, lee2017fowlp, liu2012high}. This approach involves using a secondary wafer to position the dies with more separation, thereby allowing the extra area to develop redistribution layers (RDL) that connect the neighboring dies via Cu pillars grown on the die pads. However, coefficient of thermal expansion (CTE) mismatch, warpage, and misalignment during bonding can lead to defects that compromise the integrity of the interconnects \cite{wang2021warpage, ding2016molding, partho_warp, parthowarp2, parthowarp3, Lim_protrusion}. A comprehensive study of these defects is essential to understand the impact of defect size and develop test solutions.

Our goal is to tackle the problem of modeling and assessing the impact of various defects within Cu pillars and RDL in a FOWLP package. These defects can originate both during packaging unit processes (such as the RDL build-up processes) and in-service operations due to thermo-mechanical loads \cite{lau2018reliability, 9696476}. Our approach leverages finite element analysis (FEA) to extract the parasitics corresponding to various defect levels, utilizing the Ansys Q3D tool. We show how the impact of these defects can be analyzed by mapping them to their equivalent faulty circuits.
\par
Furthermore, built-in self-test (BIST) solutions are essential for advanced packaging \cite{hir}. BIST can be used to detect and locate opens and shorts within interconnects. Once faults are detected and localized, rerouting strategies can be employed to repair the faulty connections. Despite its importance, there has been limited research on package-level BIST. In \cite{cui2023physical}, a physically aware test method was proposed for interconnects. This method utilizes an array of pattern generators to apply patterns through the chiplet I/Os. However, it does not define fault models or specify test patterns for these models. It also lacks a fault detection mechanism and a diagnostic scheme for failure analysis. The authors in \cite{chuang2024generating, wang2024test} investigated test pattern generation schemes and proposed a test method using automated test equipment and die-wrapper registers to apply the patterns. A total of 16 patterns (4 unique codewords) can test for hard and weak defects within chiplet interconnects. However, scanning test patterns in and out of the tester for thousands of I/Os takes a prohibitive amount of time. An on-chip solution that addresses both post-assembly and in-field fault scenarios is essential for lifecycle monitoring.
\newline
In this paper, we tackle the challenges associated with detecting and localizing defects within chiplet interconnects. We begin by analyzing these defects, which allows us to identify and quantify their physical and electrical characteristics. This detailed analysis also helps us to understand the relationship between defects and fault models. Based on these fault models, we propose an efficient approach involving only three test patterns to detect, diagnose and localize all interconnect stuck-at-faults (SAFs) and bridging faults. 
The main contributions of this paper are listed as follows:

\begin{itemize}

\item We provide a comprehensive characterization of defects inside Cu pillars and RDL.
\item \normalfont {We demonstrate how to map defects in package components to their equivalent faulty circuits and perform package-level fault simulation using HSPICE.} 

\item \normalfont {We present a BIST architecture that detects, locates, and diagnoses stuck-at (SA) and bridging faults. The BIST outcomes can be combined with defect simulation results to identify the extent of deformities in Cu pillars and RDLs.} 
\end{itemize}

The rest of the paper is organized as follows. Section~\ref{sec:background} provides background on FOWLP and FEA. Section~\ref{sec:defect modeling} describes the proposed defect analysis framework and equivalent faulty circuits. In Section~\ref{sec: bist}, the BIST architecture is presented and evaluated, and BIST outcomes are mapped to deformities in package components. Section~\ref{sec:conclusion} concludes the paper.

\section{Background}\label{sec:background}
\subsection{Fan-out Wafer Level Packaging}
Conventional packaging technologies, such as flip-chip, FOVEROS, EMIB, and CoWoS, rely heavily on substrates or interposers \cite{sheikh20212}, which introduce additional layers, cost, and complexity. 
FOWLP is an alternative approach that encapsulates the entire die within a molding compound, creating a smooth and continuous surface. 
In a chip-first, face-up FOWLP method, Cu pillars are first grown at the I/O locations on the native semiconductor wafer. After that, the dies are placed on a reconstituted wafer panel using the die pick and place method. During panelization, the dies are covered on both the front and sides with epoxy molding compound (EMC), with the Cu pillars providing the current pathways through the mold. The panel is then processed by building up interconnections between the exposed Cu pillars and the Cu RDL. An epoxy laminate is applied to the backside of the panel to fully encase the die in epoxy; see Fig.~\ref{fig:deca_package}. Adaptive Patterning (AP) is used to dynamically scan the position of the Cu bumps and adjust interconnect routing to reduce the impact of die shift \cite{san2023integrating}.


\begin{figure}[!tbp]
    \centering
    \includegraphics[width=0.6\columnwidth]{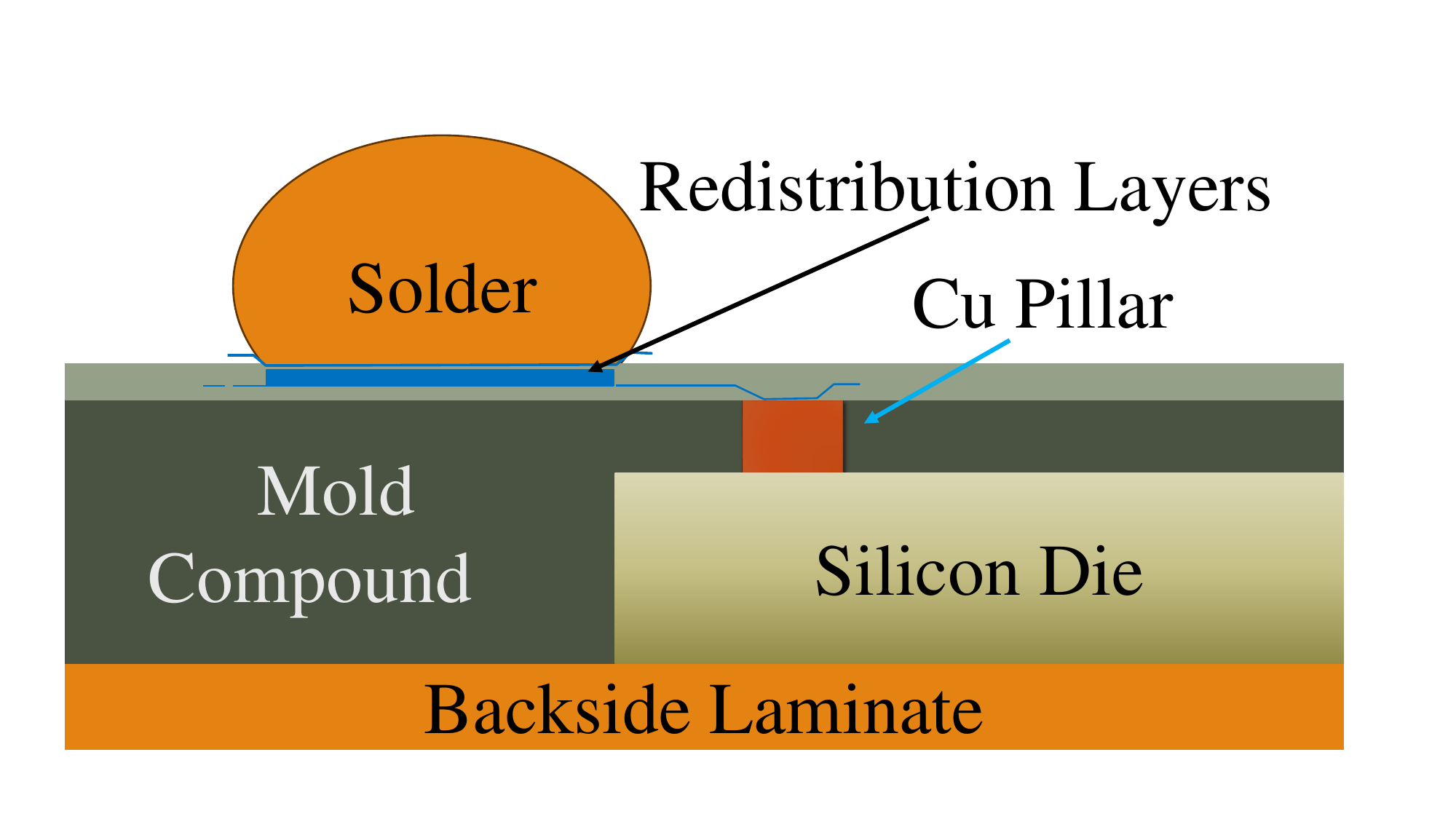}
    \caption{Illustration of a fully molded FOWLP structure \cite{rogers2013implementation}.}
    \label{fig:deca_package}
\vspace{-1.4em}
\end{figure}

Testability solutions for this promising technology have yet to be studied. \cite{rogers2013implementation} demonstrated a test vehicle approach to assess interconnect continuity using a daisy chain. This approach is limited to test vehicles and does not extend to practical applications. Ansys modeling was carried out to simulate the thermal cycling performance of the test package, focusing on strain energy density distribution in critical solder joints. However, there are many other sources of defects that need to be studied. Comprehensive design for test structures is necessary to detect, diagnose, and repair defects.

\subsection{Finite Element Analysis}
FEA begins by creating a mesh of the object, applying appropriate boundary conditions and loads, and solving mathematical equations describing each element's physical behavior. While traditional FEA is used for mechanical and thermal analysis, tools such as Ansys Q3D extend these principles to the electromagnetic domain, enabling precise extraction of electrical parasitics. Ansys Q3D Extractor is a 3D quasi-static electromagnetic field simulator that directly extracts parasitic parameters (R, L, C) for complex 3D structures. 

\label{subsec:fea}

\section{Defect Analysis}\label{sec:defect modeling}
\subsection{ Defects and Equivalent Circuits}

{\bf Cu Pillars:}  These vertical elements extend from the Si die through the molding compound and connect to the RDL. The CTE mismatch between copper (CTE six times higher than silicon \cite{6231431}) and the other materials can induce significant thermal and mechanical stress. This stress may result in cracks or fractures within the Cu pillars. Moreover, adaptive patterning utilized in FOWLP is highly sensitive, and minor deviations during dynamic scanning can lead to increased contact resistance between Cu bumps and RDL \cite{san2023integrating}. Furthermore, improper warpage management can cause small-pitch Cu pillars (20 µm) to be shorted to each other.

{\bf RDL:} The RDL pitch can be as small as 2 µm. At such fine pitch, residue and dust particles can lead to bridging and increased coupling between adjacent lines \cite{vangai2017comprehensive}. Additionally, incomplete or over-etching during manufacturing can damage RDL layers. The above defects are summarized in Table~\ref{tab:defect_list}. 
\begin{table}[h!]
\centering
\tablefontsize
\caption{Summary of defects and root causes.}
\renewcommand{\arraystretch}{1.2} 
\begin{tabular}{|p{1.4cm}|p{1.8cm}|p{4cm}|}
\hline
\textbf{Component}        & \textbf{Defect Type} & \textbf{Root Cause}                                     \\ \hline
\multirow{3}{*}{Cu Pillars} & Cracks/Breaks       & CTE mismatch                                              \\  
                            & Misalignment       & AP deviation or Cu smearing                               \\  
                            & Bridge             & Warpage-induced merging                                   \\ \hline
\multirow{3}{*}{RDL}        & Coupling          & Dielectric-conductor proximity                            \\  
                            & Bridge             & Metal residue during fabrication                          \\  
                            & Damaged RDL        & Etching errors, electromigration                          \\ \hline
\end{tabular}
\label{tab:defect_list}
\vspace{-1.3em}
\end{table}

\subsection{Fault-free RLC Extraction and Equivalent Circuit}
Before modeling defects, it is essential to first determine the RLC parameters of the interconnect components in the nominal case. We created the geometry of these three elements using Q3D Extractor. The dimensions and material of the elements, along with the extracted parasitics, are presented in Table ~\ref{tab:RLC}. Note that the RDL length will vary depending on the placement of chiplets. To account for this variability, we provide an equation derived from our simulation to calculate the self and mutual capacitance of RDL for any length $L$.

\begin{table}[h!]
\centering
\tablefontsize
\caption {Details about the interconnect components.}
\renewcommand{\arraystretch}{1.3}
\begin{tabular}{|p{2.5cm}|p{1.6cm}|p{2.5cm}|}
\hline
\textbf{Property}                             & \textbf{Cu Pillar}                        & \textbf{RDL}                                     \\ \hline
Dimensions              & d: 20 µm \newline h: 20 µm      & $L$ µm $\times$ 2 µm $\times$ 2 µm \\ \hline
Resistance                                    & 1.11 m\(\Omega\)                                    & 4.31 m\(\Omega\)/µm                                           \\ \hline
Self Capacitance                                  & 3.21 fF                                     & $(1+(L/5-1)\times 0.72)\times 0.7$ fF                                                  \\ \hline
Mutual  Capacitance                           & Negligible                           & $0.092 \times L$ fF                                     \\ \hline
\end{tabular}
\label{tab:RLC}
\vspace{-0.9em}
\end{table}



The resistance values for all components closely align with the hand-calculated results, whereas the capacitance values are higher than expected. The discrepancy arises because the simplification for hand calculation does not account for fringe effects at the edges of the device. In reality, fringe effects can contribute up to 50\% of the total capacitance at this dimension \cite{ibm}, resulting in higher overall values than calculated. Nevertheless, we can conclude that Q3D accurately models these effects, offering realistic and comprehensive capacitance parasitics. The lumped simulation model of a Cu pillar and RDL segment is illustrated in Fig.~\ref{fig:equivalent_circuits}.

\subsection{Ansys Modeling and Equivalent Faulty Circuits}

{\bf Cu Pillars:} We consider a crack at a height of 10 µm within the Cu pillar, characterized by a thickness of 0.5 nm. This defect is introduced in such a manner that it does not extend across the entire cross-sectional area of the pillar, thereby preserving a residual conductive path, as depicted in Fig.~\ref{fig:Ansys_figure}. Based on the remaining conductive cross-section, we extract the resistance $R_f$ and capacitance $C_f$ values to develop the equivalent faulty circuit associated with this defect (Fig.~\ref{fig:equivalent_circuits}). In cases where the pillar is completely severed with no conductive path remaining, a capacitance is formed between the top and bottom segments of the pillar. For this scenario, only the $C_f$  value needs to be considered.

\begin{figure}[!tbp]
    \centering
    \includegraphics[width=1\columnwidth]{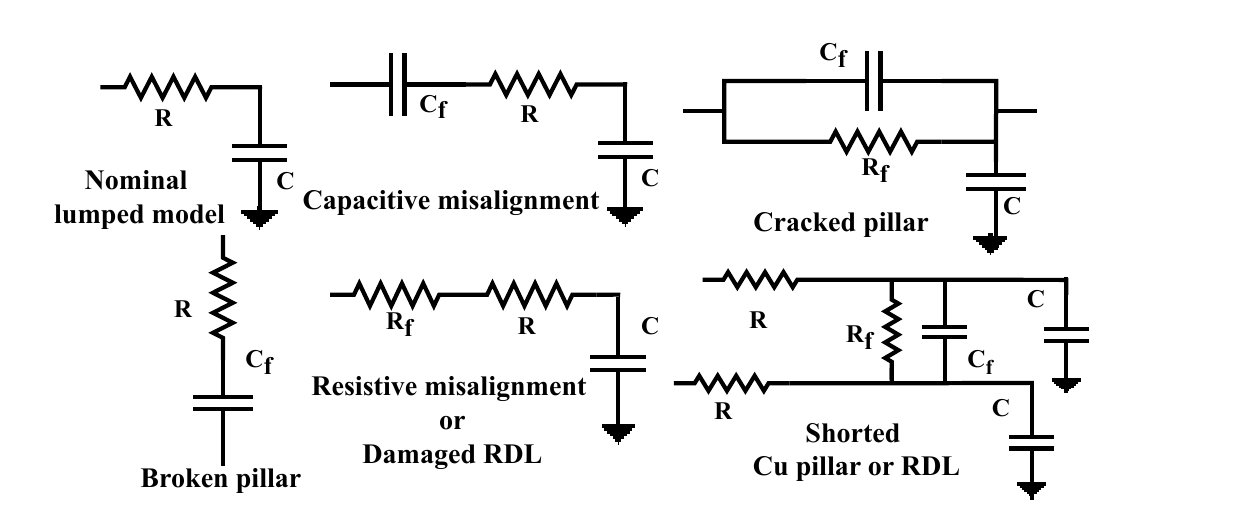}
    \caption{Equivalent nominal and faulty circuits for Cu pillars and RDL.}
    \label{fig:equivalent_circuits}
\vspace{-1.9em}
\end{figure}

Misalignment with the RDL can manifest in two ways:

1. Resistive Misalignment: This defect occurs when the RDL maintains contact with the pillar, but the overlap between the two components decreases due to die shift. The reduction in surface contact area increases resistance, which we model as $R_f$. We add a cylindrical base atop the pillar to serve as the RDL contact point. By varying the area of this cylindrical base, we extract the misalignment-induced resistance for various defect severities. Note that incomplete bonding due to pressure and thermal effects may introduce additional contact resistance, which must be incorporated into the extracted value as the Q3D tool does not account for these extrinsic factors.

2. Capacitive Misalignment: This occurs when the RDL fails to make proper contact with the Cu bump (Fig.~\ref{fig:Ansys_figure}), leaving a gap that forms a capacitive defect modeled by $C_f$ in Fig.~\ref{fig:equivalent_circuits}.

To simulate bridging defects, we introduce metal residues modeled as a combination of spherical and cylindrical geometries that span between adjacent bumps (Fig.~\ref{fig:Ansys_figure}). The radius of this conductive bridge is varied to account for different levels of defect severity, with Q3D being used to extract the corresponding resistance $R_f$.
\begin{figure}[!tbp]
    \centering
    \includegraphics[width=0.8\columnwidth]{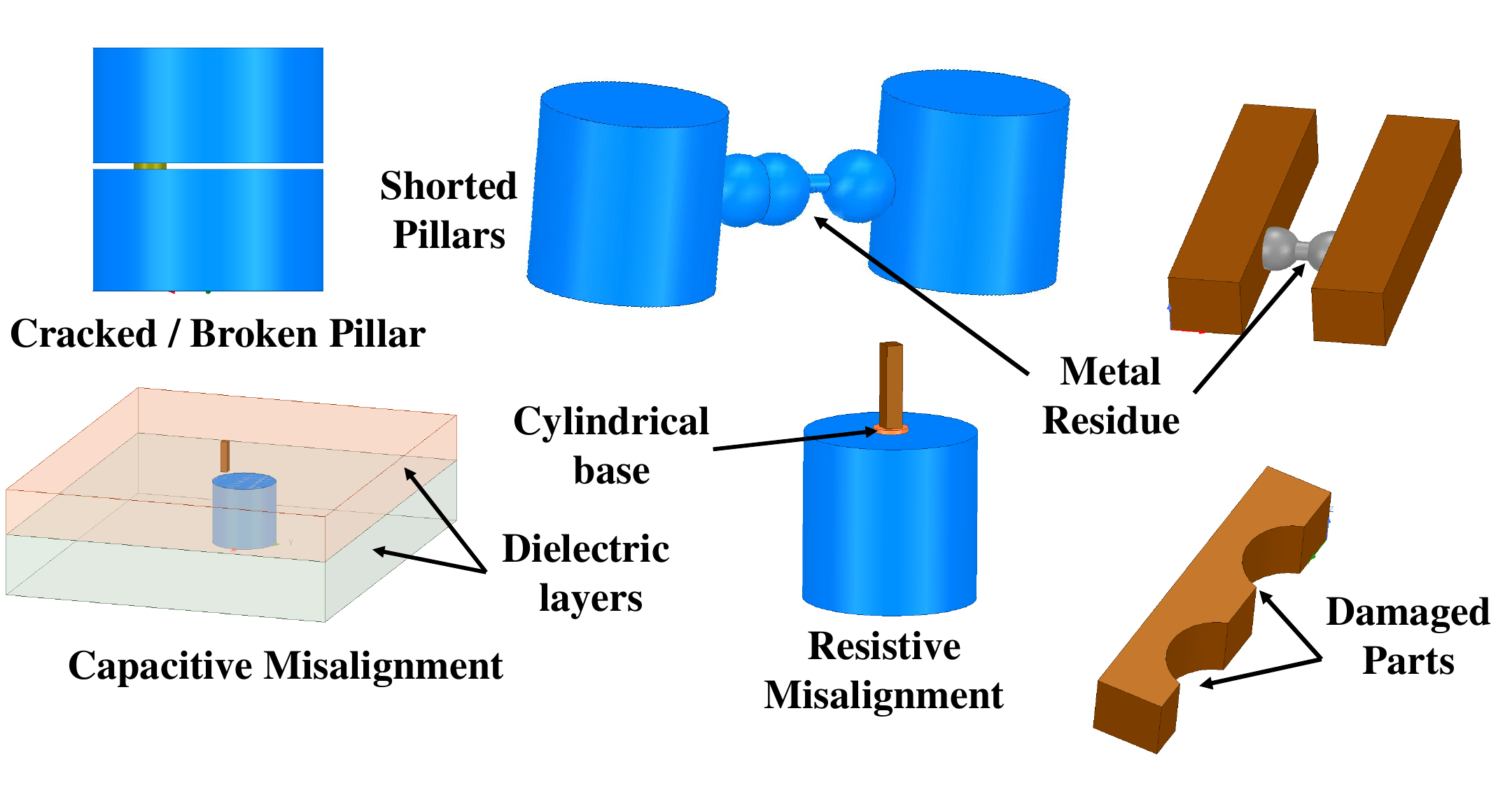}
    \caption{Ansys modeling of Cu pillars and RDL.}
    \label{fig:Ansys_figure}
    \vspace{-2em}
\end{figure}

{\textbf{RDL:}} The analysis of RDL bridging defects mirrors that of the Cu pillar. However, the finer pitch (2 µm) of the RDL results in a shorter short-circuit path, thereby amplifying the impact of the defect when compared to Cu bumps. Additionally, the coupling capacitance between adjacent RDL layers, which is neglected for Cu bumps due to their larger pitch (20 µm), must be considered here. The metal residue between two adjacent RDL layers is shown in Fig.~\ref{fig:Ansys_figure}. Two 10 µm RDL sections are considered to compute the $R_f$ for the short.

\begin{figure*}[!t]
    \centering
    
    \subfloat[]{\includegraphics[width=0.28\textwidth]{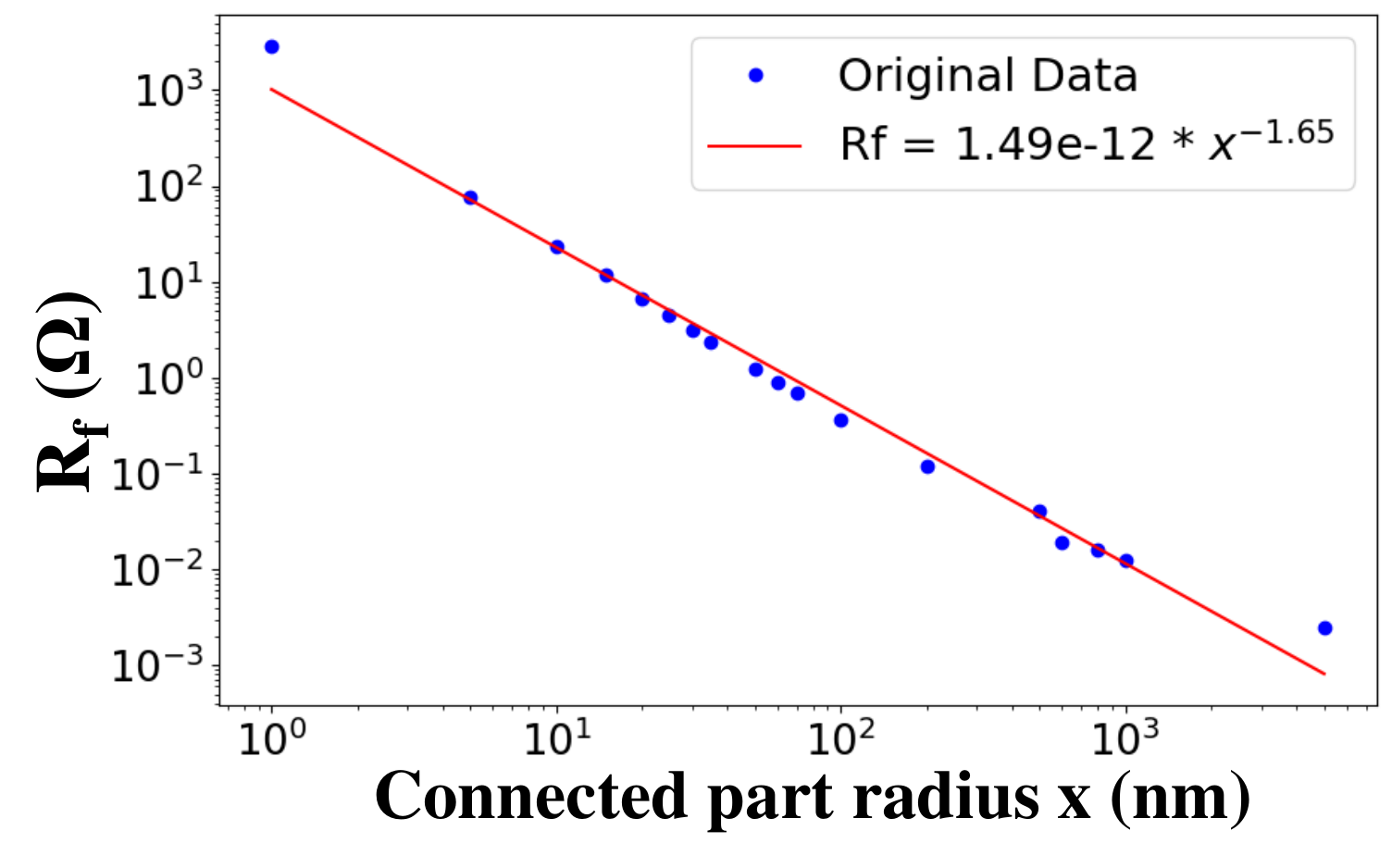}} 
    \subfloat[]{\includegraphics[width=0.28\textwidth]{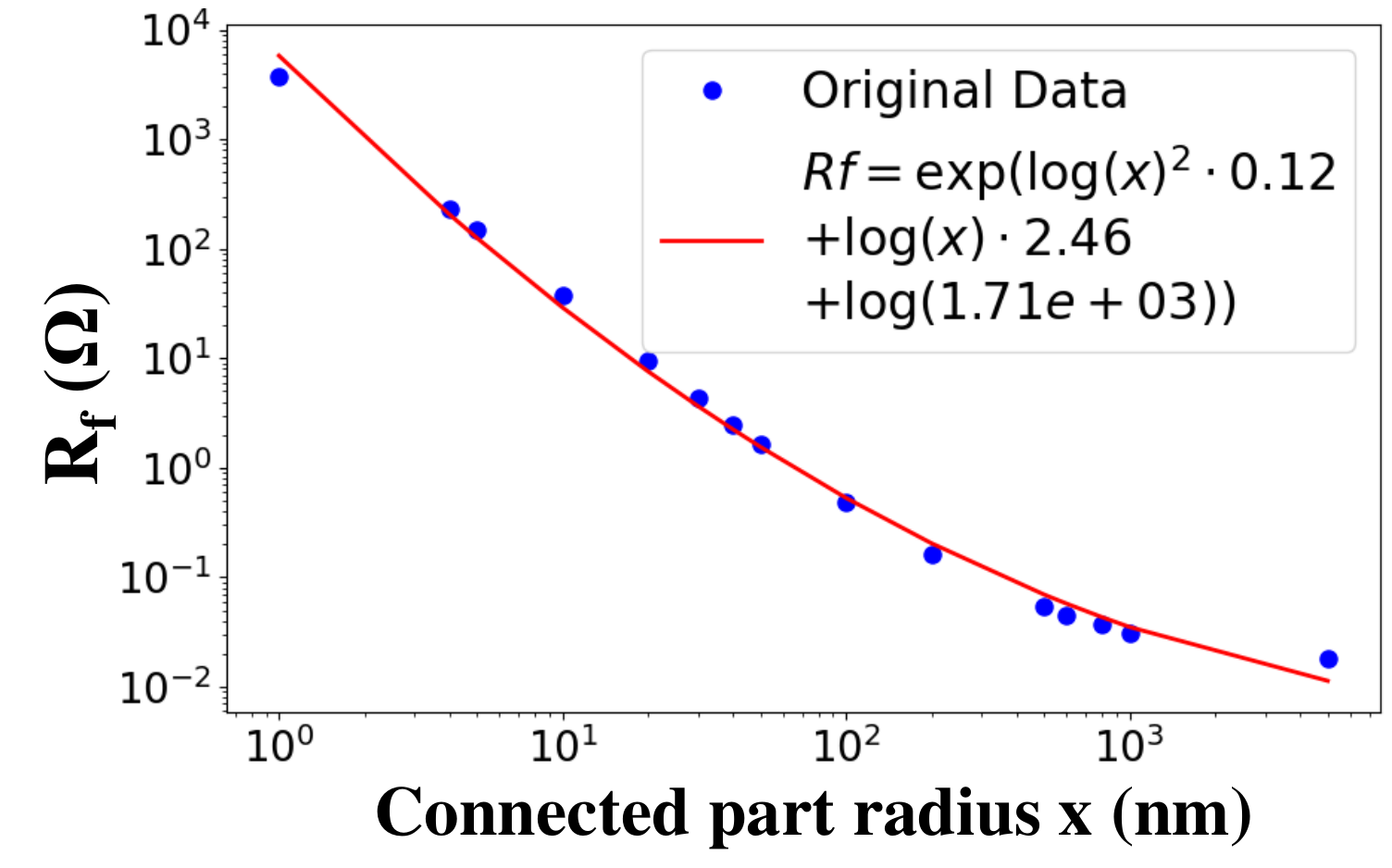}}
    \subfloat[]{\includegraphics[width=0.28\textwidth]{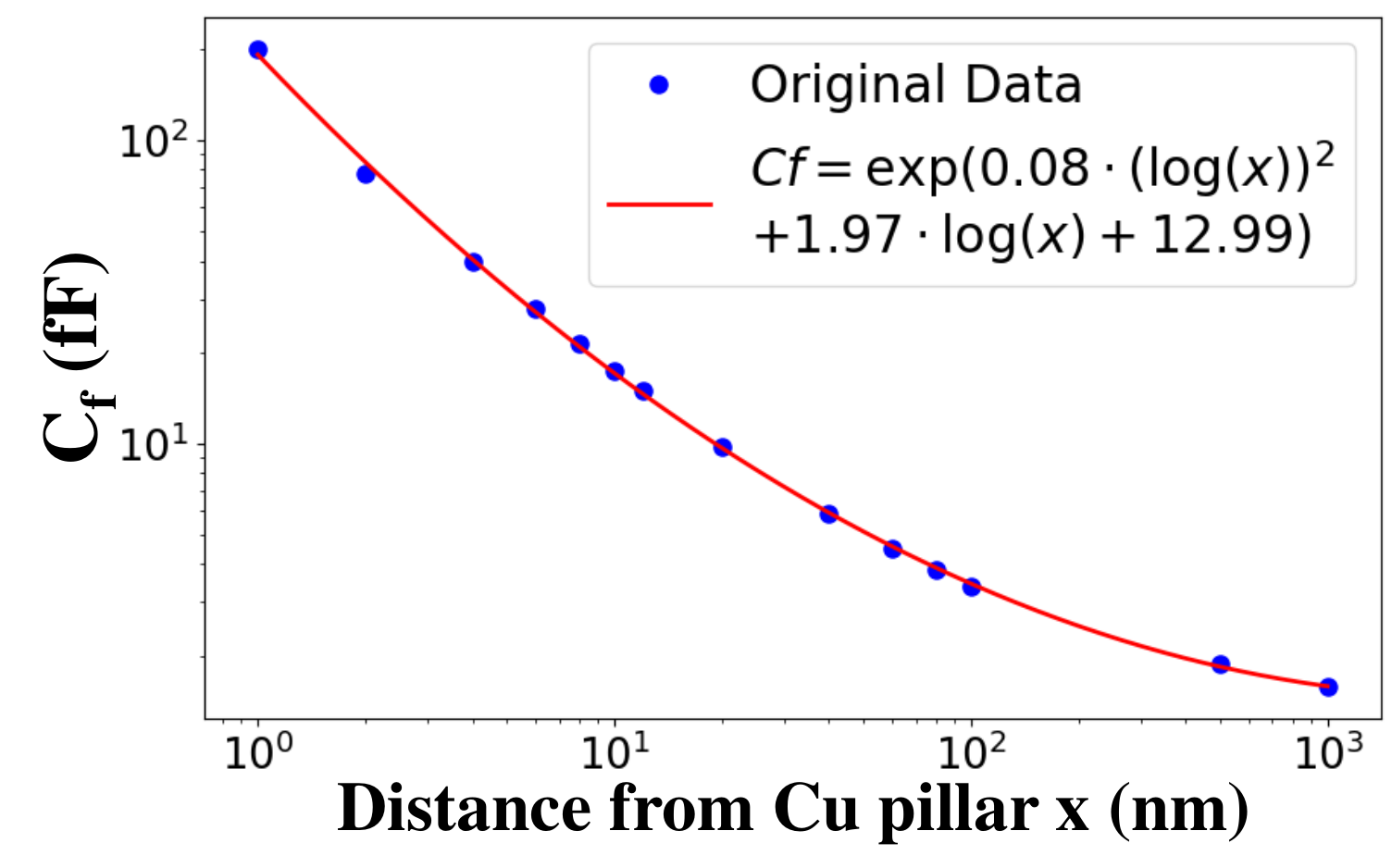}}
    \vspace{-1em}
      \\
      \subfloat[]{\includegraphics[width=0.28\textwidth]{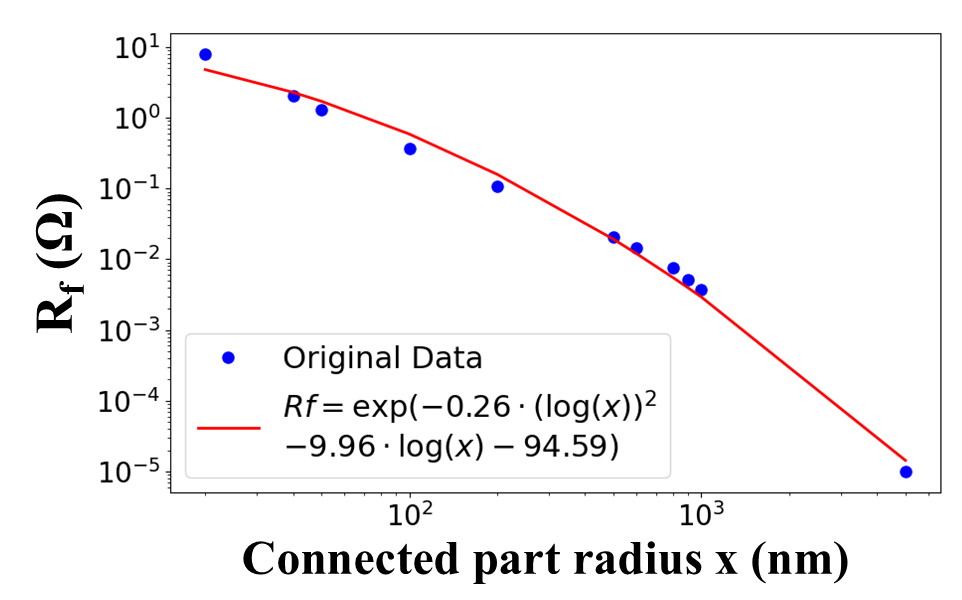}}
    \subfloat[]{\includegraphics[width=0.28\textwidth]{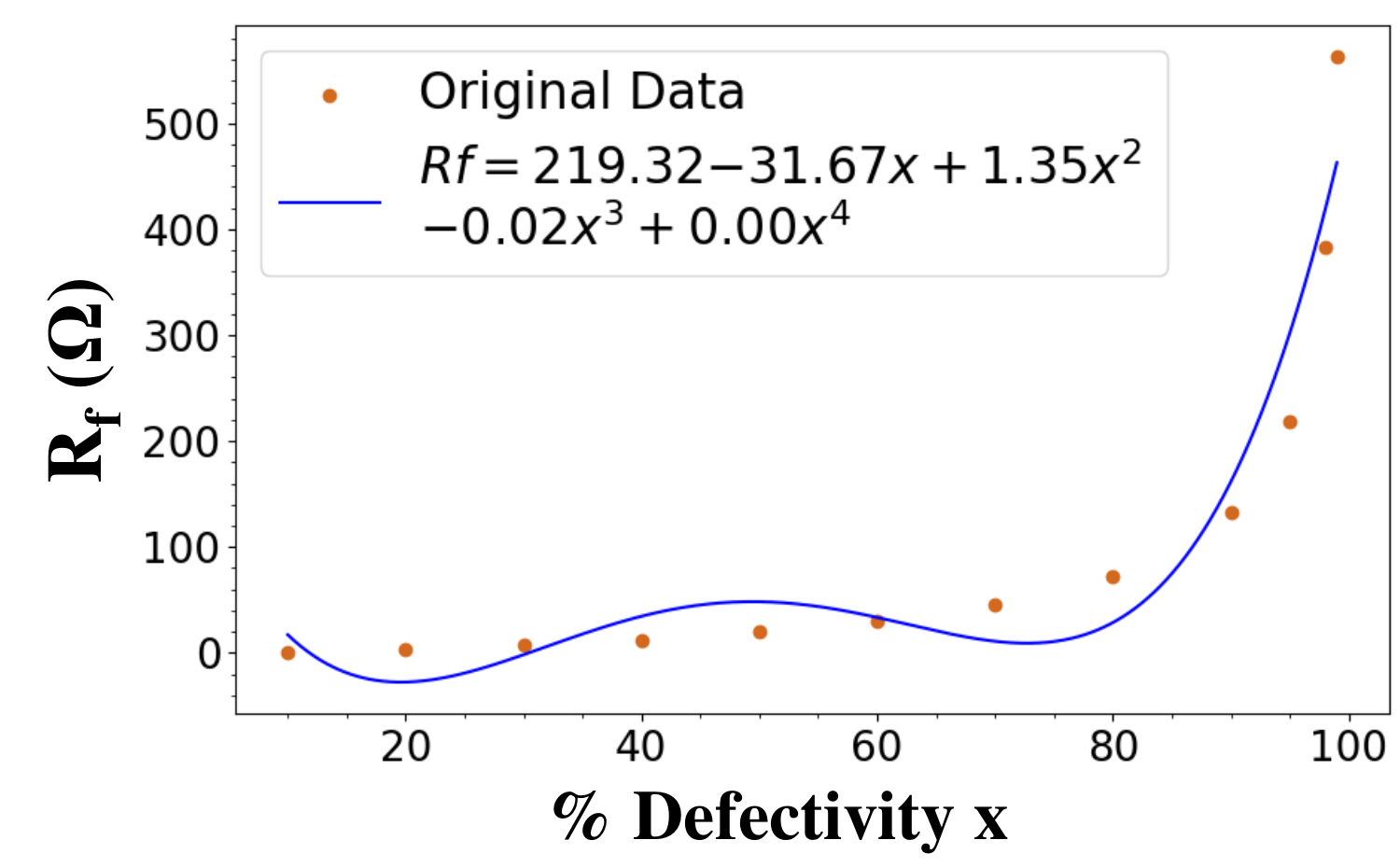}}
    \subfloat[]{\includegraphics[width=0.28\textwidth]{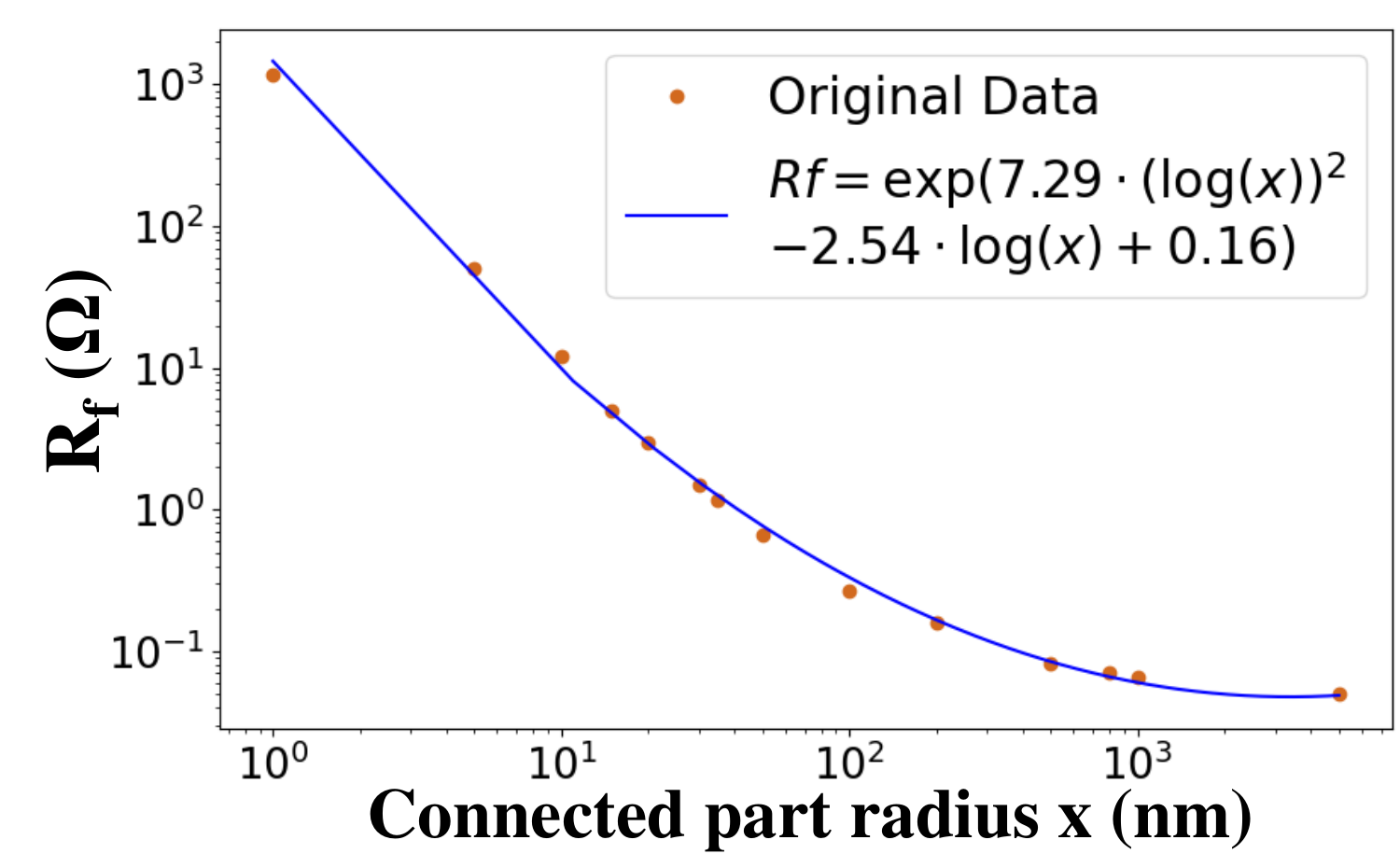}}
    
    \caption{Extracted parasitics for defects in: (a) cracked pillars; (b) shorted pillars; (c) capacitive misalignment; (d) resistive misalignment; (e) damaged RDL; (f) shorted RDL.}
    \label{fig:faulty_RLC_plot}
    \vspace{-1.4em}
\end{figure*}

To analyze defects within RDL, we created a geometry of a 10 µm section. We considered two points in this section that can be etched out in the worst-case scenario. We assume that this defect is continuous for a RDL of longer length (e.g., 100–200 µm). It is modeled by adding a resistor $R_f$ to the RDL equivalent circuit (Fig~\ref{fig:equivalent_circuits}). The extracted values correlate with the percentage of defective cross-sections of this element.


\vspace{-0.5em}
\subsection{Simulation Results} \label{sec: simulation_result}
The simulation results are summarized in Fig.~\ref{fig:faulty_RLC_plot}. For each scenario, we use Python's log-linear, polynomial, or exponential curve fitting functions and generate equations that accurately model the extracted RC data. These equations are valuable for interpolating or extrapolating defect sizes.

Recall that the elements were not completely disconnected in the case of the Cu pillar cracks; therefore, the model can only define one conductor or net. Under this condition, a capacitance solver find the capacitance $C_f$. However, since the cross-sectional area of the conductive path is small compared to the total cross-section of these elements, we extract $C_f$ from the fully broken or fractured conditions of these elements. These values can then be incorporated in the equivalent faulty circuit for the cracked case. The extracted $R_f$ values for a cracked Cu pillar are shown in Fig.~\ref{fig:faulty_RLC_plot}(a). 

The fitted model for bridging defects in the Cu pillar exhibits a steeper trend than that of the RDL, as demonstrated in  Fig.~\ref{fig:faulty_RLC_plot}(b) and Fig.~\ref{fig:faulty_RLC_plot}(f). This difference arises because the pitch of the RDL is ten times smaller than that of the Cu bumps. Therefore, when a bridging defect occurs in RDL, the impact of the defect is much more critical for signal integrity. 


Fig.~\ref{fig:faulty_RLC_plot}(c) and Fig.~\ref{fig:faulty_RLC_plot}(d) illustrate the RC parasitics for misaligned bonding of Cu pillar and RDL. An exponential relationship between the defect size and the degree of misalignment is evident, highlighting the importance of precise patterning. Fig.~\ref{fig:faulty_RLC_plot}(e) presents the extracted $R_f$ values for the defective RDL. It is important to note that these values are based on a 10 µm section of the RDL, and appropriate extrapolation is needed for other lengths.

\subsection{Fault Analysis}
\begin{figure}[!tbp]
    \centering
    \includegraphics[width=0.65\columnwidth]{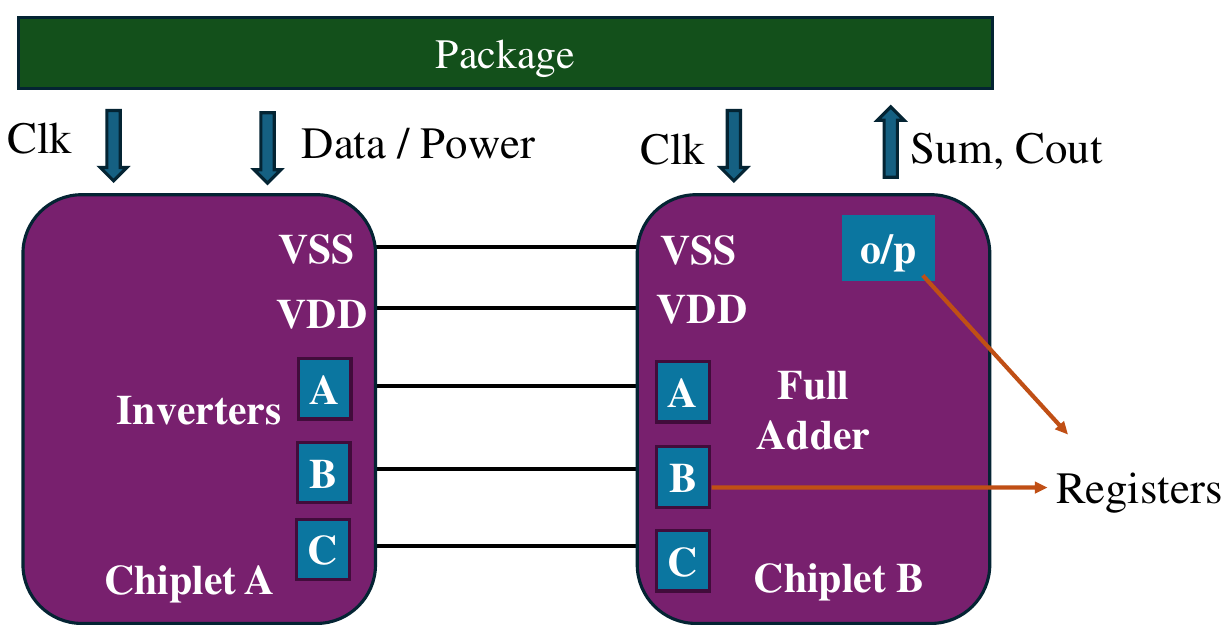}
    \caption{HSPICE setup for package-level fault simulation.}
    \label{fig:fault_sim}
\vspace{-1.9em}
\end{figure}
To analyze the functional impact of defects, we perform package-level simulations in HSPICE using OPENROAD ASAP7 PDK, as shown in Fig.~\ref{fig:fault_sim}. In this setup, Chiplet A is modeled with inverters for processing signals from the package; Chiplet B contains a full-adder circuit. All I/Os are wrapped with registers to emulate a real-world core. Using equivalent faulty circuits, we identify defect size thresholds that lead to catastrophic functional failures. These defect sizes are then mapped to the extent of structural deformities using the results presented in Section~\ref{sec: simulation_result}.

For example, capacitive misalignment or a deformation (break) in Cu bump introduces an open defect (\(C_{open}\)). Such an open in \(V_{DD}\) (\(V_{SS}\)) results in a SA-0 (SA-1) fault at chiplet B. An open in \(V_{DD}\) creates a floating node and the registers in chiplet B interpret this floating value as logic 0. Conversely, for \(V_{SS}\), the logic circuits at chiplet B establish a connection between \(V_{DD}\) and \(V_{SS}\), raising the \(V_{SS}\) potential to \(V_{DD}\), thereby causing a SA-1 fault. Interference from the signal lines on \(V_{DD}\) and \(V_{SS}\) was noticeable in the presence of open defects, but it did not cause any functional failures in the simulation. However, when an open defect occurs in signal wires (e.g., wire A), coupling effects are more prominent. For instance, if A and B are capacitively coupled, B significantly interferes with A. Beyond a critical misalignment gap ($\approx$ 20 nm), net A follows B entirely, and this behavior can be modeled as either a wired-AND (w-A) or wired-OR (w-O) fault. The threshold value of (\(C_{open}\)) increases with coupling effects, meaning that even smaller opens can trigger catastrophic faults as coupling increases.

\begin{table}[htbp]
\centering
\caption{Functional faults associated with defect size.}
\tablefontsize 
\renewcommand{\arraystretch}{1.2}
\begin{tabular}{|p{2.15cm}|p{2.8cm}|p{2.7cm}|}
\hline
\textbf{Defect Type}          & \textbf{Defect Size}            & \textbf{Functional Fault}       \\ \hline
\(V_{DD}\) Open                     & 0.1 fF $<$ \(C_{open}\) $<$ 2 $\mu$F            & \(V_{DD}\) SA-0 , O/P SA-0                  \\
\(V_{SS}\) Open                     & 0.1 fF $<$ \(C_{open}\) $<$ 2 $\mu$F             & \(V_{SS}\) SA-1, O/P SA-1                       \\ 
A open                       & \(C_{open}\) $<$ 10 fF                  & Wired-And / Wired-Or            \\ 
A shorted to \(V_{DD}\)             & \(R_{short}\) $<$ 500 $\Omega$         & A SA-1                    \\ 
A shorted to \(V_{SS}\)             & \(R_{short}\) $<$ 600 $\Omega$         & A SA-0                    \\ 
A shorted to B               & \(R_{short}\) $<$ 200 $\Omega$         & Wired-And                       \\ \hline

\end{tabular}
\label{tab:impacts_defects}
\vspace{-1.7em}
\end{table}

 Resistive shorts between signal lines and power/ground lines manifest as SA-1 (shorted to \(V_{DD}\)) or SA-0 (shorted to \(V_{SS}\)). A radial path as small as 3 nm (between the Cu bumps) or 2 nm (between the RDLs) is sufficient to cause such functional failures. Similarly, a resistive short between two signal wires (e.g., A and B) results in a w-A fault, where both signals are forced to 0 if either one is low. This behavior is observed with a radial path of 4 nm (Cu bumps) or 3 nm (RDLs). The defect sizes and their corresponding functional impacts are summarized in Table.~\ref{tab:impacts_defects}.  Any deformities smaller than these thresholds manifest as resistive defects, the analysis of which is left for future work.

\section{BIST Solution}\label{sec: bist}

In chiplet-based packages, shorts occur almost 95\% times more often than opens \cite{chakravarty2022special}. From the fault analysis in Section~\ref{sec:defect modeling}, we observe that some opens can be modeled as SAFs while others can be modeled as bridging faults due to coupling. 
Our BIST solution targets these hard SAFs and bridging faults, focusing on the high-risk areas identified from defect analysis. Testing for parametric faults and defect size estimation is left for future work. Given that FOWLP enables thousands of I/Os for each chiplet \cite{azemar2016fan}, test pattern application must be carefully optimized. For instance, when testing for a short defect, applying alternating test patterns is required for the interconnect group under test (GUT). 

The arrangement of Cu pillars (rectangular or hexagonal) affects the probability of shorts between these bumps. From \cite{chuang2024generating}, we know that in a hexagonal configuration, each bump can potentially short to any of its 12 adjacent bumps, as illustrated in Fig.~\ref{fig:bump_map}(a). By modeling this as a graph coloring problem, it has been shown that only four distinct codewords are needed to test these 12 faults. Now consider a GUT (represented as a circle in Fig.~\ref{fig:bump_map}(b)) containing 13 bumps. Within this GUT, each bump has the potential to short with multiple neighboring bumps. Each bump on the bump map generates its own ``circle" of potential shorts. These individual circles overlap because a single bump can be part of multiple circles (or GUTs), resulting in a complex, interconnected web of overlapping circles. To achieve
high fault coverage and diagnostic resolution, we must test all these overlapping GUTs simultaneously. However, this task becomes more challenging as the number of I/Os grows. This is because each bump requires a dedicated detector circuit, resulting in increased area overhead. Additionally, signals from the test pattern generator (TPG) must route to numerous bumps, causing congestion.

To address these issues, we propose dividing the bump map into several smaller blocks. Each bump within a block will receive specific test patterns. For shorts that may occur at the boundary between two blocks, we mitigate interference by setting the patterns of the blocks not currently being tested to all 0s. This ensures that any short between blocks will cause the affected bump within the active block to produce an error.
\begin{figure}
    \centering
    \subfloat[]{\includegraphics[width=0.45\columnwidth]{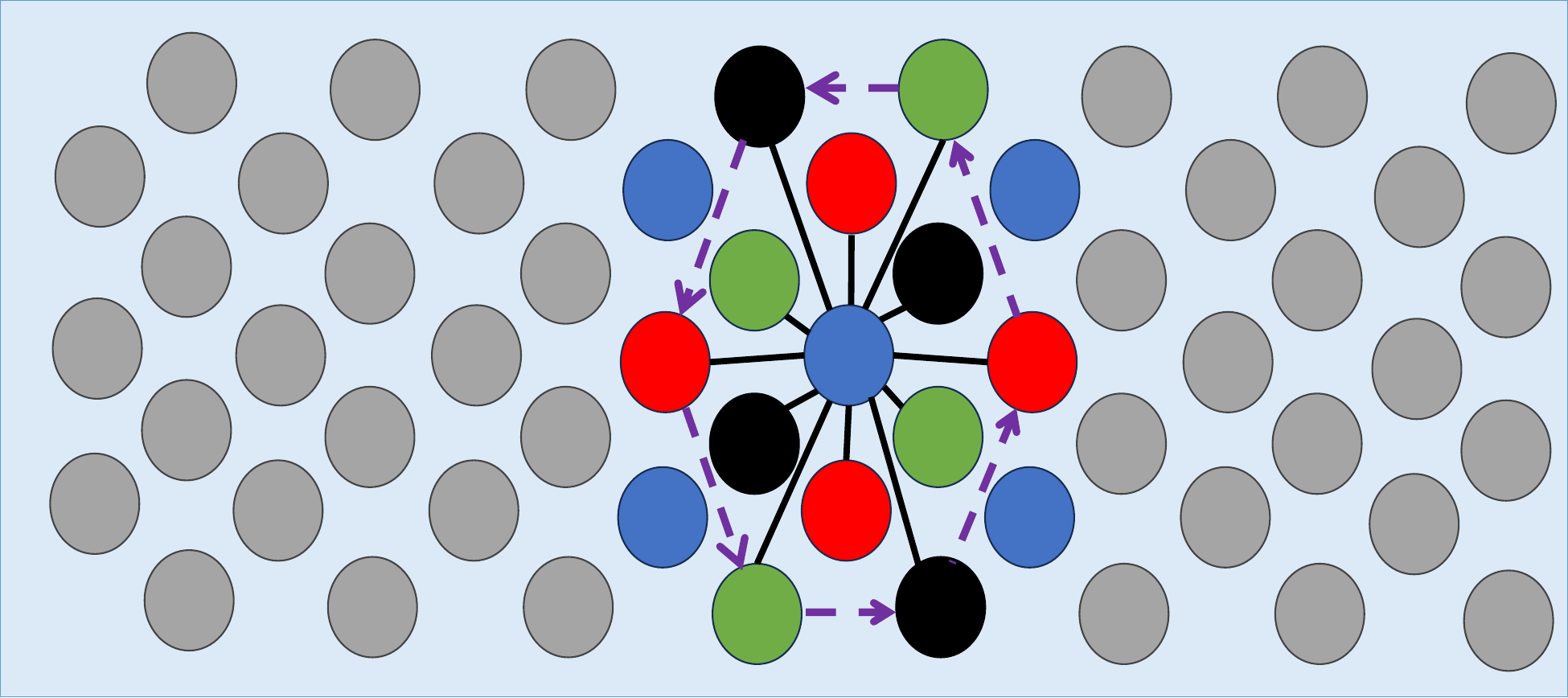}}
    \subfloat[]{\includegraphics[width=0.46\columnwidth]
    {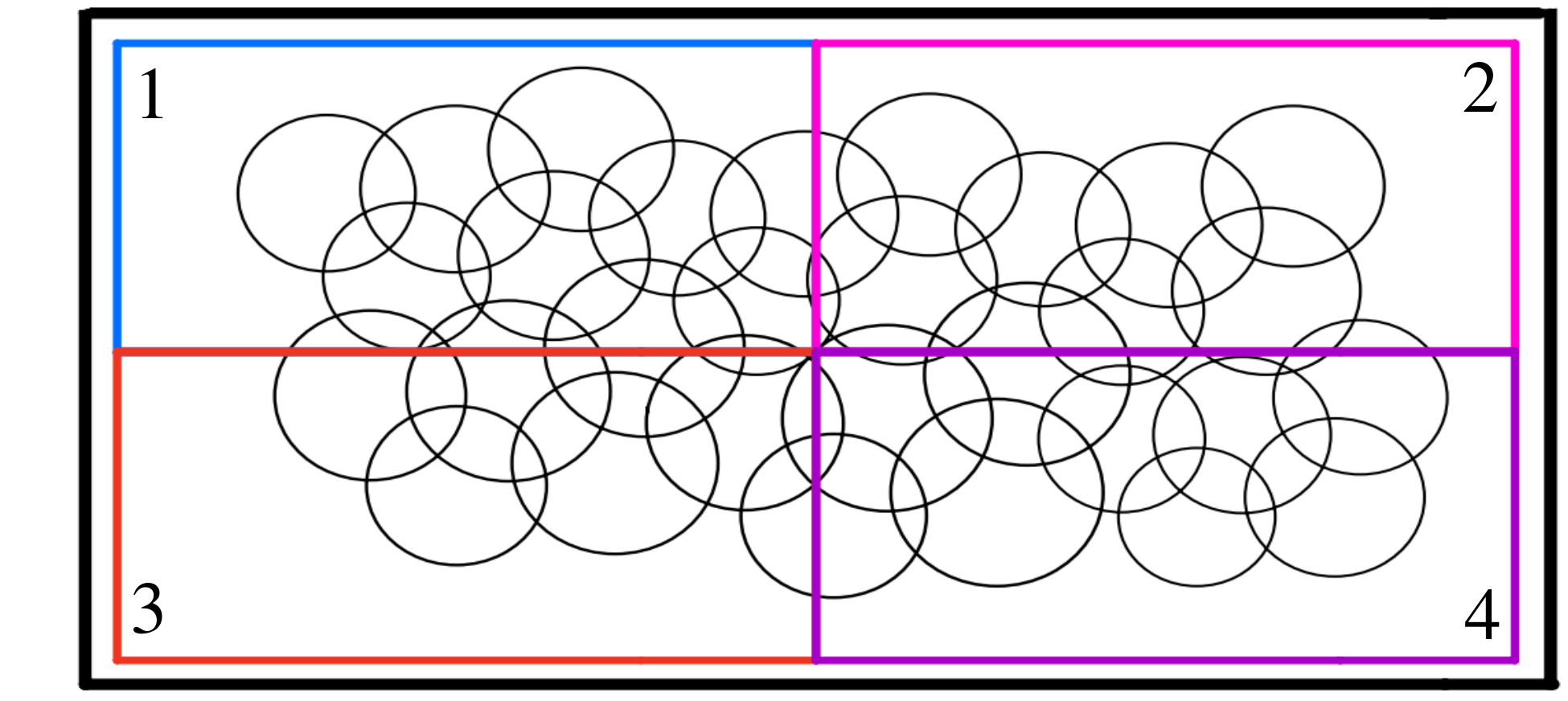}} 

    \caption{ (a) Potential bridging map (solid) for single bump   \cite{wang2024test} and illustration of a GUT (dashed) (b) Block-level division for testing.}
    \label{fig:bump_map}
     \vspace{-1.7em}
\end{figure}
\vspace{-1em}
\subsection{Detection and Diagnosis}
Since distinct test patterns need to be applied to only four colored bumps in a GUT, detecting faults within these selected bumps ensures comprehensive testing of the entire GUT. We next show that three test patterns, as illustrated in Fig.~\ref{fig:detection}(a), can detect all shorts and SAFs among these bumps. It is important to note that the input patterns for the green (blue) and black (red) bumps complement each other. Therefore, a detection circuit designed to identify faults in the green (blue) bumps can detect faults in the black (red) bumps. Fig.~\ref{fig:detection}(a) presents the detector in both the green and blue bumps. By adding an inverter, the faults in the black and red bumps can also be tested. To analyze the test results, we consider two terminals of this detection circuit: \(x\) and \(y\).  If no faults are present in the interconnects, the output will match the input signals sent from the driver side. However, if a bridging fault occurs between the bumps, the received signals will be erroneous.

\begin{figure}
    \centering
    \subfloat[]{\includegraphics[width=0.5\columnwidth]{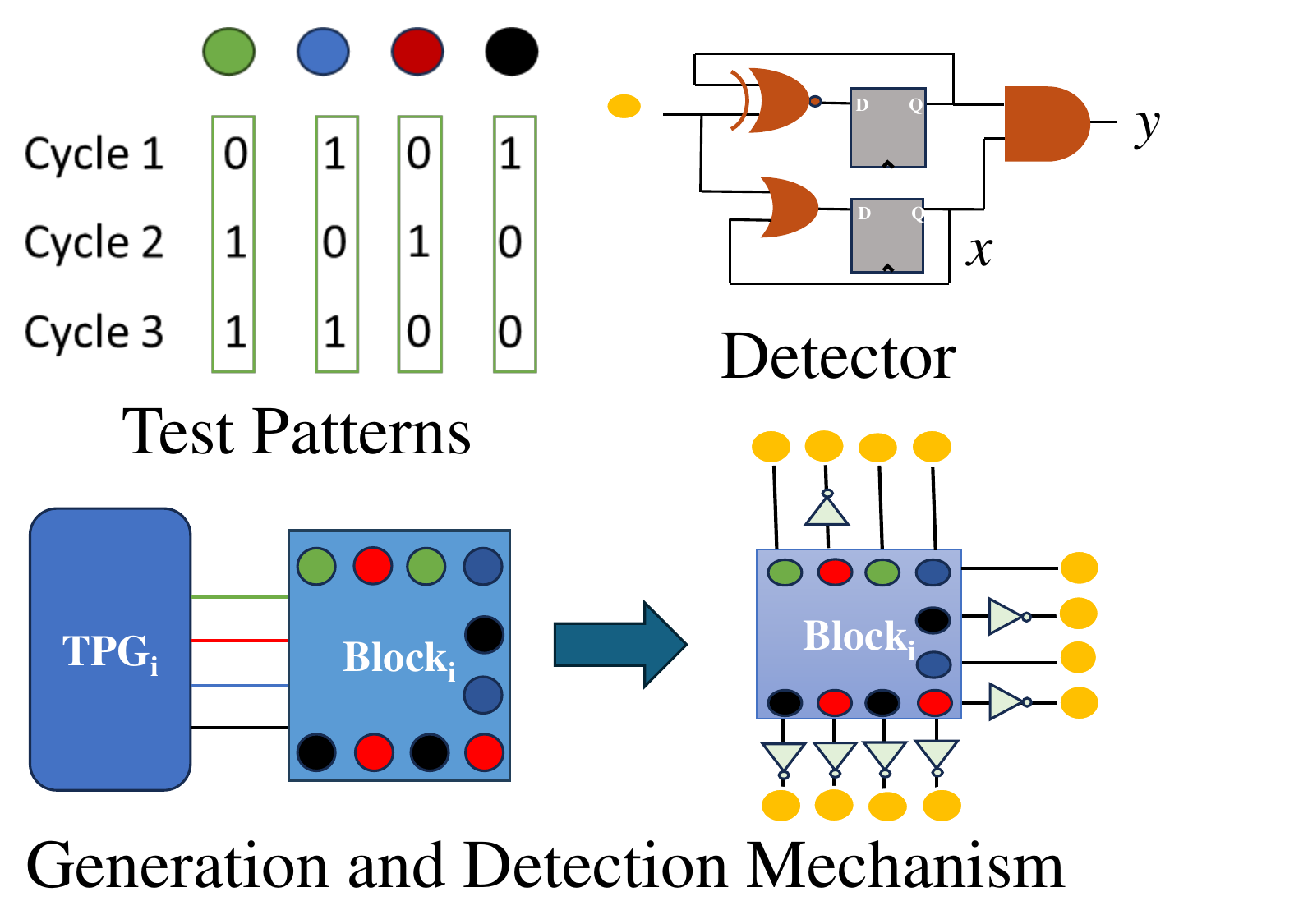}}
    \subfloat[]{\includegraphics[width=0.3\columnwidth]
   {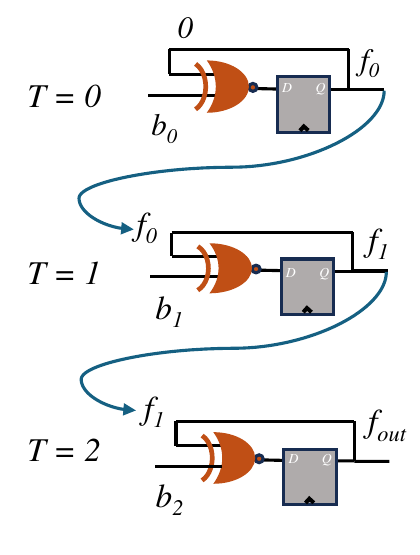}}
    \caption{ (a) Test patterns and detection circuit (b) Response analysis.}
    \label{fig:detection}
     \vspace{-1.4em}
\end{figure}

Let the incoming bits to the detection circuit at time steps \( T = 0 \), \( T = 1 \), and \( T = 2\) be \( b_0 \), \( b_1 \), and \( b_2 \), respectively. The corresponding outputs at these time steps are \( f_0 \), \( f_1 \), and \( f_{\text{out}} \); see Fig.~\ref{fig:detection}(b). It can be seen that the output of the first stage, denoted as \( f_0 \), is \( b_0' \) and \( b_2\) must equal \(f_1 \) to ensure that \( f_{out} = 1 \). Moreover, when \( b_0 = 0 \), \( b_1\) and \( b_2\) must be equal (00 or 11) whereas \( b_0 = 1\), \( b_1\) and \( b_2\)  needs to alternate (01 or 10) to yield \(f_{out} = 1\). However, in the case when all of them are 0, the OR loopback path enforces \(f_{out} = 0\). Thus, the set of inputs that makes the output 1 is \{011, 110, 101\}.

\begin{theorem}
The three test patterns shown in Fig.~\ref{fig:detection}(a) can detect all SAFs and bridging faults associated with interconnects between chiplets. In addition, these patterns can distinguish between any pair of faults with an accuracy of 95.61\%.
\end{theorem}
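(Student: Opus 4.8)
The plan is to split the statement into its two claims---detection of all targeted faults, and pairwise diagnostic resolution---and handle them with a case analysis driven by the response table already sketched before the theorem. First I would make the fault list explicit: on each of the four colored bumps in a GUT we have three candidate SAFs (SA-0, SA-1, and ``fault-free''), and for each pair of bumps that can be bridged we have the wired-AND and wired-OR behaviors. Because the excerpt already establishes that the green/black patterns and blue/red patterns are bitwise complements, and that an inverter at the detector front-end converts a green-bump detector into a black-bump detector (similarly blue/red), I would argue that it suffices to analyze detection on the green and blue bumps and then invoke this symmetry to cover black and red. For detection of SAFs, I would use the characterization ``$f_{out}=1 \iff (b_0,b_1,b_2)\in\{011,110,101\}$'': the three applied patterns drive the good circuit through exactly one of these accepting triples, so a SA-0 or SA-1 on any observed line forces the triple out of the accepting set at some time step, hence $f_{out}=0$ and the fault is flagged. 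For bridging faults, I would combine the patterns of the two bumps involved under the wired-AND (resp. wired-OR) rule, recompute the received triple on each victim line, and check that it leaves the accepting set; the ``set patterns of inactive blocks to all-0'' convention from the BIST section handles inter-block bridges by the same argument (an all-0 aggressor forces the victim's triple to contain a corrupted 0).

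For the second claim I would build the \emph{response signature} of each fault: the vector of $(f_{out}$ at $T{=}0,1,2)$ observed across the relevant detectors (green/blue, and their inverted copies for black/red), together with which detector fired. Two faults are \emph{distinguishable} exactly when their signatures differ. I would tabulate these signatures over all faults in a GUT---SAFs on the four colored bumps plus all admissible bridging pairs among the 12 neighbors, collapsed onto the 4-coloring---and count the pairs that collide. The claimed $95.61\%$ is then $1 - (\#\text{colliding pairs})/\binom{N}{2}$ for the appropriate fault count $N$; I would state $N$ and the collision count explicitly and note that the residual ambiguity comes from symmetric bridge pairs that produce identical corrupted triples on both victims (e.g.\ a w-A between two green-class bumps looks the same from either end). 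I would close by remarking that the undiagnosed pairs still lie within a small localized neighborhood, so diagnostic resolution at the block level is unaffected.

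The main obstacle I expect is the bridging-fault bookkeeping for the diagnosis count, not the detection proof. Detection is a clean finite check against the three-element accepting set. But getting \emph{exactly} $95.61\%$ requires fixing a precise convention for which bridges are ``counted'' (all 12 neighbor directions? only those surviving the 4-coloring? both w-A and w-O per pair?), how a bridge's two endpoints are treated (one fault or two), and whether SAF-vs-bridge cross pairs are included in the denominator---the percentage is brittle to these choices. I would therefore pin down the fault universe first, derive the signature table, and only then report the collision fraction; if the arithmetic does not land on $95.61\%$ I would expect the discrepancy to trace back to the endpoint-counting convention for bridges rather than to any gap in the detection argument.
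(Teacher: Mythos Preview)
Your detection argument is essentially the paper's: both reduce to checking, bump by bump, that every SAF and every wired-AND/wired-OR corruption drives the received triple out of the accepting set $\{011,110,101\}$, and the complement-plus-inverter symmetry you invoke is exactly what the paper uses to fold black/red onto green/blue.

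The diagnosis half, however, does not line up, and your own caveat about the fault-counting convention is where the gap actually bites. Three concrete mismatches:
\begin{itemize}
\item \textbf{Fault universe.} The paper fixes $N=14$: eight SAFs (SA-0 and SA-1 on each of the four colored bumps) plus six bridging faults, one per unordered \emph{color} pair. A bridge is counted once, not separately for w-A and w-O; the 12-neighbor picture collapses to the 4-coloring so there are exactly $\binom{4}{2}=6$ bridge classes. Your proposal leaves this open and leans toward counting w-A and w-O separately, which will not reproduce $95.61\%$.
\item \textbf{Observable signature.} The detector does not emit a time series $(f_{out}@T{=}0,1,2)$; it compresses the three received bits into a single pass/fail and exposes the pair $(x,y)$ per bump. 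The paper's diagnosis is carried out entirely on these $(x,y)$ pairs across the two bumps involved, not on per-step outputs.
\item \textbf{Source of aliasing.} Your guess that the indistinguishable pairs are ``symmetric bridge pairs \ldots\ between two green-class bumps'' cannot occur: within a GUT each color appears once. The actual four collisions are all of the form \{SA-1 on a bump\} versus \{a bridge touching that bump whose w-A or w-O response happens to coincide with the bump's fault-free pattern on the partner and with the SA-1 signature on itself\}: (green SA-1, green+red), (red SA-1, green+red), (blue SA-1, blue+black), (black SA-1, blue+black). With $\binom{14}{2}=91$ pairs and $4$ collisions, $D=(91-4)/91\approx 0.9561$.
\end{itemize}
So the missing idea is not extra bookkeeping but pinning down (i) that bridges are counted per color pair, (ii) that the signature is the $(x,y)$ pair, and (iii) that the aliasing is SA-1--vs--bridge, which follows directly from the two ``detected in one bump'' rows of the response table.
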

\begin{proof}
We enumerate all faulty responses in Table ~\ref{tab:detection}. \\
\textbf{SAFs:} A SA-0 fault in any interconnect results in an all-0 response, making both \(x\) and y equal to 0. In contrast, a SA-1 fault causes \(y\) to remain 0 while \(x\) becomes 1. Bridging defects manifest as paired faults for adjacent bumps. SA-1 faults can be differentiated from bridging defects when a single faulty response is observed.

\begin{table}[htbp]
\centering
\caption{Fault detection scenarios.}
\renewcommand{\arraystretch}{1.4}
\tablefontsize
\begin{tabular}{|p{1.3cm}|p{1.8cm}|p{2.5cm}|p{1.5cm}|}
\hline
\textbf{Faults}        & \textbf{\( b_0b_1b_2\) \newline(w-A) (w-O)} & \textbf{Response (x, y) \newline(Bump1) (Bump2)} & \textbf{Detection Outcome}                             \\ \hline
SA-0                  & 000                       & (0, 0)                  & Detected                                      \\ \hline
SA-1                  & 111                       & (1, 0)                  & Detected                                     \\ \hline
Green + Blue           & (001) \newline (111)                  & (1, 0)   (1, 0) \newline  (1, 0)   (1, 0)           & Detected in both                        \\ \hline
Green + Red            & (010) \newline (011)                     & \textbf{(1, 0)}   (1, 1)\newline  (1, 1)\textbf{ (1, 0)}          & Detected in one bump        \\ \hline
Green + Black          & (000) \newline (111)        & (0, 0)   (1, 0) \newline (1, 0)   (0, 0)        & Detected in both                              \\ \hline
Blue + Red             & (000) \newline (111)                      & (0, 0)  (1, 0) \newline(1, 0)   (0, 0)         & Detected in both                              \\ \hline
Blue + Black           & (100)  \newline (101)                     & \textbf{(1, 0) }(1, 1) \newline (1, 1)   \textbf{(1, 0)  }        & Detected in one bump      \\ \hline
Red + Black            & (000) \newline (110)                     & (1, 0)   (1, 0)   \newline(1, 0)   (1, 0)        & Detected in both                              \\ \hline
\end{tabular}
\label{tab:detection}
\end{table}

\textbf{Green Bump:} If the green bump is shorted with blue bumps, w-A and w-O generate 001 and 111, respectively. In both cases, \(y\) is 0 while  \(x\) remains 1, thus detecting the fault. When the green bump is shorted with the red bump, w-A produces 010, which matches the fault-free response for the red bump; the fault is only detected at the  \(y\) terminal of the green interconnect. Conversely, w-O produces the opposite effect. Lastly, when shorted with the black bump, the response is either all 0 or all 1, depending on the wired operation, and this fault is detected at both bumps (\(y = 0\)).

\textbf{Blue Bump: }A short between blue and red bumps can be analyzed as above. When blue is shorted to black, w-O produces a 101 signal, which is the nominal pattern for the blue interconnect. As a result, the fault is not detected at \(y\). However, it is still identified in the black interconnect. The w-A model results in the opposite outcome. 

\textbf{Red Bump:} For red and black interconnects, w-A returns all-0, which is easily detected. The w-O operation generates 110; an inverter before the detection circuit converts 110 to 001, allowing the fault to be detected.

\begin{figure}
    \centering
    \includegraphics[width=0.9\columnwidth]{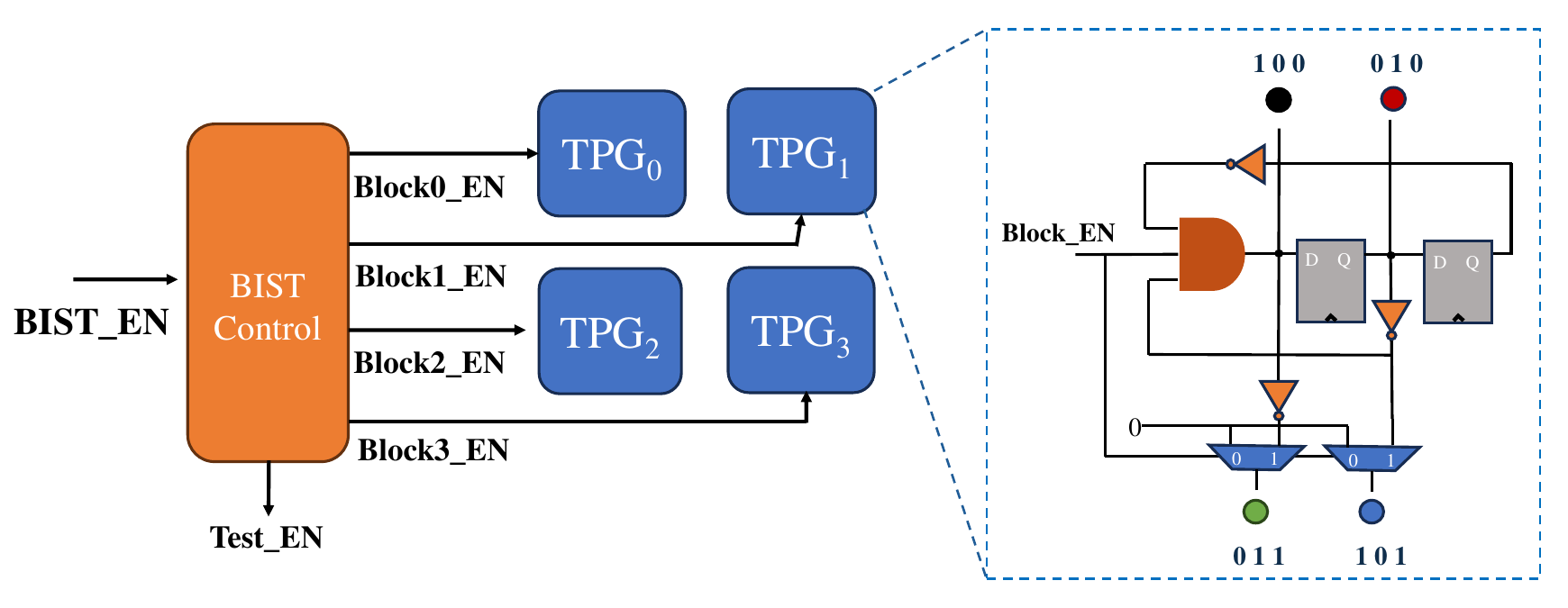}
    \caption{Design of the control circuit for test pattern application.}
    \label{fig:control}
    \vspace{-1.5em}
\end{figure}


There are two scenarios where the erroneous \(y\)-value is only observed in one bump. Take the w-A case for the green and blue bridging fault. The \((x,y\)) pair for the green bump is identical to that of a green SA-1 fault. Therefore, it is impossible to differentiate between a green SA-1 fault and a green-red bridging fault. There are three additional indistinguishable fault pairs: (blue SA-1, blue+black), (red SA-1, green+red), and (black SA-1, red+black).
The total number of faults consists of 8 SAFs and 6 bridging faults. The diagnosability \( D \) is therefore given by \( D = (\binom{14}{2} - 4)/\binom{14}{2} \approx 0.956 \).
\end{proof}
\vspace{-0.7em}

Even when a fault is not diagnosable, it is detected in at least one bump. By rerouting one of the signals to a spare bump, we can avoid discarding the package. The detected faults can be mapped to Table~\ref{tab:impacts_defects} and Fig.~\ref{fig:faulty_RLC_plot} to deduce the range of defect size and deformities in chiplet interconnects.

\subsection{Control Architecture and PPA Assessment}

The BIST control circuit is shown in Fig.~\ref{fig:control}. Upon activation of the \(BIST_{EN}\) signal, the finite state machine (FSM) transitions from the IDLE state and activates the \(Test_{EN}\) signal. The FSM starts to iterate from the IDLE state to enable all the blocks one by one; refer to Fig. ~\ref{fig:fsm}. For each block, the FSM generates a \(Block_{EN}\) signal that initiates generating patterns for all four colored bumps. All GUTs in a block can share the same TPG and receive the corresponding test patterns for each colored bump. The FSM further guarantees that blocks not under test are set to receive a zero signal by setting the other \(Block_{EN}\) signals to a low state. Two MUXes in the generator circuit ensure that when \(Block_{EN}\) is low, the bumps of that block receive a 0 signal. Furthermore, all blocks can be tested sequentially by applying the same \(BIST_{EN}\) signal at the receiving end. One \(BIST_{EN}\) signal can be used for each I/O interface to synchronize block-by-block testing, supporting communication in both directions.

\begin{figure}
    \centering
    {\includegraphics[width=0.4\textwidth]{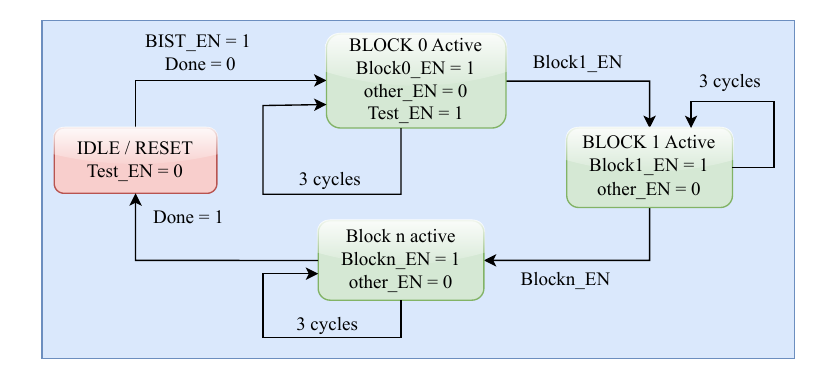}} 
    \caption{Block diagram of the FSM for BIST control.}
    \label{fig:fsm}
     \vspace{-1.6em}
\end{figure}

To evaluate PPA impact, we considered the AES and DES OpenCore benchmarks. Two copies of each core were combined by feeding the output of the first core into the input of the second, forming two hypothetical multi-chiplet packages. AES has a higher I/O count (128), while the DES is larger, with half the number of interconnects. For each design, we considered two test scenarios. In the first scenario, the interface is divided into two blocks. For AES, two test pattern generator (TPG) blocks drove test patterns to 64 I/Os (32 for the DES package), while 64 (32) detectors were used on the receiving side. The detectors can be shared between blocks to optimize area, and a MUX selects the appropriate response signals based on the specific \(Block_{EN}\) data.

\begin{table}[htbp]
{\small
\centering
\tablefontsize
\caption{PPA analysis for different configurations.}
\renewcommand{\arraystretch}{1.4}
\begin{tabular}{|c|c|c|}
\hline

\textbf{Parameter}        & \textbf{AES}            & \textbf{DES}            \\ \hline
Clock frequency  & 666.67 MHz      &  1 GHz \\ \hline
Nominal Area (µm\(^2\))      & 40231                   & 66693                   \\
2-Block BIST-ed             & 42211 \((\uparrow 4.92\%)\) & 67618 \((\uparrow 1.39\%)\)  \\
4-Block BIST-ed              & 41677 \((\uparrow 3.6\%)\)  & 67842 \((\uparrow 1.73\%)\)  \\ \hline    
Nominal Power (mW)            & 12.81                    & 57.19                   \\
2-Block BIST-ed           & 13.66 \((\uparrow 6.6\%)\)  & 57.86 \((\uparrow 1.15\%)\)  \\
4-Block BIST-ed              & 13.29 \((\uparrow 3.75\%)\) & 58.1 \((\uparrow 1.57\%)\)   \\ \hline
\end{tabular}
\label{tab:ppa}
\vspace{-0.6em}
}
\end{table}


In the second scenario, the I/Os were divided into four blocks, each with 32 (16) I/Os and 32 (16) detectors. This setup required the deployment of four TPG blocks. For both test cases, we synthesized the designs using the 45 nm Nangate standard cell library; the results are shown in Table~\ref{tab:ppa}.



Transitioning from a 2-block to a 4-block configuration for AES leads to a reduction in both power and area overhead. However, note that while the 4-block configuration is more PPA-efficient, it results in increased test time (due to sequential testing) compared to the 2-block configuration. The DES case study involves fewer I/Os. Although the detector circuits are fewer, more TPG blocks leads to more control circuits, and the combinational circuit requirement becomes dominant, increasing area overhead. The total area overhead is proportional to the number of I/Os per interface; adding more blocks becomes beneficial as the number of I/Os increases. 

\section{Conclusion}\label{sec:conclusion}
We have presented a defect analysis framework for FOWLP and demonstrated how these defects can be mapped to their corresponding faulty circuits. Through fault simulation on a test package in HSPICE, we illustrated what level of deformities can lead to catastrophic functional failures. Additionally, we introduced a BIST architecture that achieves full fault coverage using only three test patterns while also providing fault diagnosis with over 95\% accuracy. Results for two benchmarks highlight how design decisions can be optimized with respect to PPA and test time constraints.

\section*{Acknowledgment}
This research was supported in part by the National Science Foundation in part by CHIMES.

\bibliographystyle{IEEEtran}
\bibliography{reference}

\end{document}